\documentclass[pra,twocolumn,superscriptaddress]{revtex4-1}
\usepackage{graphicx}
\usepackage{dcolumn}
\usepackage{bm}
\usepackage{amssymb}
\usepackage{braket}
\usepackage{lipsum}
\usepackage{amsmath}
\usepackage{color}
\usepackage[colorlinks=true,linkcolor=blue,citecolor=red,plainpages=false,pdfpagelabels]{hyperref}
\usepackage{cleveref}

\newcommand{\dya}[1]{\ket{#1}\!\bra{#1}}

\newcommand{\Tr}{{\rm Tr}}

\newtheorem{theorem}{Theorem}
\newtheorem{proposition}[theorem]{Proposition}
\newenvironment{proof}[1][Proof]{\noindent\textbf{#1.} }{\ \rule{0.5em}{0.5em}}

\begin{document}

\title{Entropic Energy-Time Uncertainty Relation}
\author{Patrick J. Coles}
\affiliation{Theoretical Division, Los Alamos National Laboratory, Los Alamos, New Mexico 87545, USA}
\author{Vishal Katariya}
\affiliation{Hearne Institute for Theoretical Physics, Department of Physics and Astronomy, Louisiana State University, Baton Rouge, Louisiana 70803, USA}
\author{Seth Lloyd}
\affiliation{Department of Mechanical Engineering, Massachusetts Institute of Technology, Cambridge, Massachusetts 02139, USA}
\affiliation{Research Laboratory of Electronics, Massachusetts Institute of Technology, Cambridge, Massachusetts 02139, USA}
\author{Iman Marvian}
\affiliation{Departments of Physics \& Electrical and Computer Engineering, Duke University, Durham, North Carolina 27708, USA}
\author{Mark M. Wilde}
\affiliation{Hearne Institute for Theoretical Physics, Department of Physics and Astronomy, Louisiana State University, Baton Rouge, Louisiana 70803, USA}
\affiliation{Center for Computation and Technology, Louisiana State University, Baton Rouge, Louisiana 70803, USA}
\date{\today}

\begin{abstract}
Energy-time uncertainty plays an important role in quantum foundations and technologies, and it was even discussed by the founders of quantum mechanics. However, standard approaches (e.g., Robertson's uncertainty relation) do not apply to energy-time uncertainty because, in general, there is no Hermitian operator associated with time. Following previous approaches, we quantify time uncertainty by how well one can read off the time from a quantum clock. We then use entropy to quantify the information-theoretic distinguishability of the various time states of the clock. Our main result is an entropic energy-time uncertainty relation for general time-independent Hamiltonians, stated for both the discrete-time and continuous-time cases. Our uncertainty relation is strong, in the sense that it allows for a quantum memory to help reduce the uncertainty, and this formulation leads us to reinterpret it as a bound on the relative entropy of asymmetry. Due to the operational relevance of entropy, we anticipate that our uncertainty relation will have information-processing applications.
\end{abstract}

\maketitle

\textit{Introduction}---The uncertainty principle is one of the most iconic implications of quantum mechanics, stating that there are pairs of observables that cannot be simultaneously known. It was first proposed by Heisenberg \cite{heisenberg} for the position $\hat{q}$ and momentum $\hat{p}$ observables and then rigorously stated by Kennard \cite{Kennard1927} in the familiar form using standard deviations:
$
\label{eqnKennard}
\Delta \hat{q} \Delta \hat{p} \geq \hbar/2$.
Robertson \cite{robertson} later formulated a similar relation for a different class of observables, namely, for pairs of bounded Hermitian observables $\hat{X}$ and $\hat{Z}$ (e.g., the Pauli spin operators), as
$
\label{eqnRobertson}
\Delta \hat{X} \Delta \hat{Z} \geq \frac{1}{2} |\langle [\hat{X},\hat{Z}] \rangle |$.
Since then, many alternative formulations have been proven for similar Hermitian operator pairs (e.g., \cite{Busch2014, Maccone2014}).

Unfortunately, these relations do not apply to energy and time since time does not, in general, correspond to a Hermitian operator. In particular, Pauli's theorem states that the semi-boundedness of a Hamiltonian precludes the existence of a Hermitian time operator, or in other words, if there was such an operator, then the Hamiltonian would be unbounded from below and thus unphysical \cite{pauli}.  Hence, formulating a general energy-time uncertainty relation is a nontrivial task. We point to \cite{butterfield} for an overview on time in quantum mechanics.

Nevertheless, the energy-time pair is of significant importance both fundamentally and technologically. Energy-time uncertainty was already discussed by the founders of quantum mechanics: Bohr, Heisenberg, Schr\"odinger, and Pauli (see \cite{Dodonov2015} for a review).  In the special case of the harmonic oscillator, this pair corresponds to number and phase, and number-phase uncertainty is relevant to metrology \cite{Giovannetti2004}, e.g., phase estimation in interferometry. The energy-time pair is arguably the most general observable pair in the sense that it applies to all physical systems (i.e., all systems have a Hamiltonian). 
 
Despite the lack of a Hermitian observable associated with time, relations with the feel of energy-time uncertainty relations have been formulated. Mandelstam and Tamm \cite{mandelstamm-tamm} related the energy standard deviation $\Delta E$ to the time $\tau$ that it takes for a state to move to an orthogonal state:
$
\label{eqnMT}
\tau \Delta E\geq \frac{\pi \hbar}{2}\,.
$
This relation can be thought of as a speed limit---a bound on how fast a quantum state can move---and other similar speed limits have been formulated \cite{margolus-levitin}. Alternatively, it can be thought of as bounding how well a quantum system acts as a clock, since the time resolution of the clock is related to the time $\tau$ for the system to move to an orthogonal state.

In this work, we take the clock perspective on time uncertainty: one's uncertainty about time corresponds to how well one can ``read off'' the time from measuring a quantum clock. 
A natural measure for this purpose is to consider the information-theoretic distinguishability of the various time states. As such, we propose using entropy to quantify time uncertainty, and our main result is an entropic energy-time uncertainty relation.

Entropy has been widely employed in uncertainty relations for position-momentum \cite{hirschman} and finite-dimensional observables \cite{deutsch, maassen-uffink}---see \cite{entropic-ur-review} for a recent detailed review of entropic uncertainty relations. The key benefits of entropy as an uncertainty measure are its clear operational meaning and its relevance to information-processing applications. Indeed, entropic uncertainty relations form the cornerstone of security proofs for quantum key distribution and other quantum cryptographic tasks \cite{entropic-ur-review}. They furthermore allow one to recast the uncertainty principle in terms of a guessing game, as we do below for  energy and time.

An entropic uncertainty relation for energy and time was previously given in \cite{hall} by constructing an almost-periodic time observable and using a so-called almost-periodic entropy for time. This approach was extended in \cite{Hall2018}, where the Holevo information bound was used to derive an entropic energy-time uncertainty relation. However, as indicated in \cite{hall}, an almost-periodic time observable serves as a poor quantum clock for aperiodic systems. In \cite{Boette2016}, the entanglement between a system and a clock was used to derive an entropic energy-time uncertainty relation for a Hamiltonian with a uniformly spaced spectrum.

In this paper, we derive entropic energy-time uncertainty relations for general, time-independent Hamiltonians. We first derive a relation for discrete and arbitrarily spaced time, and then we extend this relation to infinitesimally closely spaced (i.e., continuous) time. Our results apply to systems with either finite- or infinite-dimensional Hamiltonians.

A novel aspect of our energy-time uncertainty relation is that it allows the observer to reduce their uncertainty through access to a quantum memory system, as was the case in prior uncertainty relations \cite{berta}. The two main benefits of allowing for quantum memory are that (1) it dramatically tightens the relation when the clock is in a mixed state, and (2) it makes the relation more relevant to cryptographic applications in which the eavesdropper may hold the memory system (e.g., see \cite{berta}). Furthermore, by allowing for quantum memory, we can reinterpret our uncertainty relation as a bound on the relative entropy of asymmetry \cite{GMS09}, and we discuss below the implications of this reinterpretation.

The fact that our uncertainty relation is stated using operationally-relevant entropies implies that it should be useful for information processing applications. For example, if one can distinguish between the time states well, then it is possible to extract randomness by performing an energy measurement. True random bits are critical to the execution of secure protocols and numerical computations. In this case, the randomness of energy measurement outcomes is certified by our bound. Entropic uncertainty relations also find use in proving the security of quantum key distribution (QKD) protocols \cite{BB84}. If one party is able to prepare states in both the phase and number bases of photons, and if another party is able to perform measurements in these two bases, then both parties can distill a secret key whose security is guaranteed by our relation. We provide more details regarding applications in the supplementary material (Appendix \ref{appF}).

\begin{figure}
\begin{center}
\includegraphics[width=2.5in]{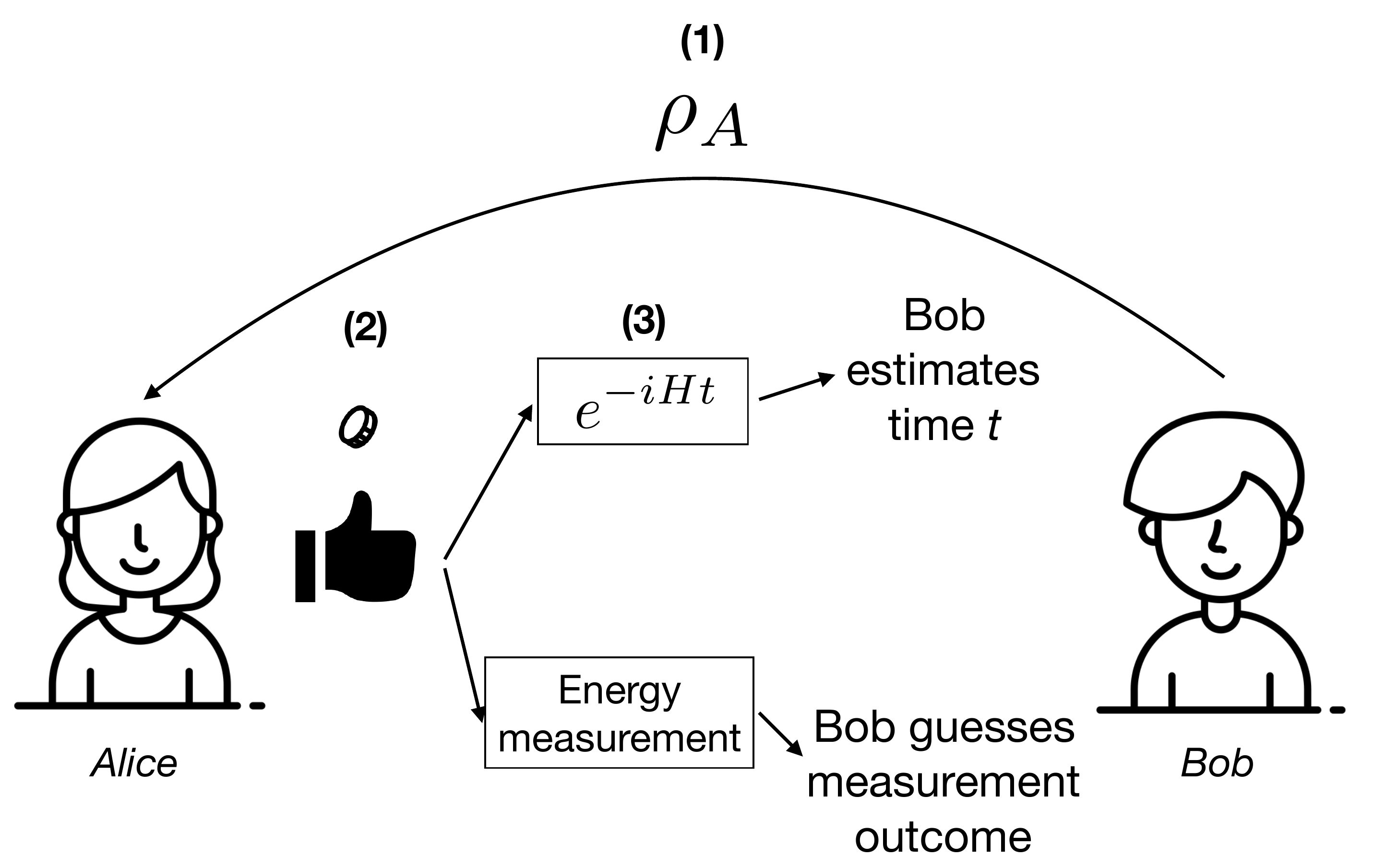}
    \caption{Guessing game for energy-time uncertainty. (1) Bob prepares a quantum clock
    in the state $\rho_A$ and sends it to Alice. (2) Alice flips a coin and (3) either measures the clock's energy or randomly sets the clock's time (i.e., applies a time evolution $e^{-iHt}$ with $t$ randomly  chosen from a predefined set). Bob's goal is to, depending on Alice's coin flip, guess the clock's energy or guess $t$ by reading the clock. Our uncertainty relations constrain Bob's ability to win this game.}
\label{fig:guessing-game}
\end{center}
\end{figure}

Uncertainty relations can be understood in the framework of a guessing game involving two players, Alice and Bob \cite{berta, entropic-ur-review}, and Figure~\ref{fig:guessing-game} shows this game for the energy-time pair. Bob prepares system $A$ in an arbitrary state $\rho_A$ and sends it to Alice. Alice then flips a coin. If she gets heads, she performs an energy measurement, and Bob then must guess the outcome (possibly with the help of a memory system $R$ that is initially correlated to $A$). If she gets tails, she applies a time evolution $e^{-iHt}$ in which $t$ is randomly chosen from some predefined set, and then sends $A$ back to Bob, who then tries to guess which time $t$ Alice applied.  All of our uncertainty relations can be understood in terms of this guessing game and can be viewed as constraints on Bob's probability of winning this game (i.e., guessing both the energy and time correctly). There are other variations of this energy-time uncertainty guessing game that are possible, one of which is discussed in the Supplementary Material (Appendix~\ref{appA}).

In what follows, we give some necessary preliminaries before stating our main result for the R\'enyi entropy family in the discrete-time case, and then we extend to the continuous-time case for the von Neumann entropy. Finally, we apply our relation to an illustrative example of a spin-1/2 particle.

\textit{Preliminaries}---We begin by considering a finite-dimensional Hamiltonian $H$ that acts on a quantum system $A$, and suppose that it has $N_E \in \mathbb{Z^+}$ real energy eigenvalues taken from a set $\mathcal{E}\subset \mathbb{R}$. We thus write the Hamiltonian as
$
H_A=\sum_{\varepsilon\in\mathcal{E}}\varepsilon\Pi_A^{\varepsilon},\label{eq:ham}
$
where $\Pi_A^{\varepsilon}$ denotes the projector onto the subspace spanned by energy eigenstates with eigenvalue $\varepsilon$. The projectors obey $\Pi^{\varepsilon}\Pi^{\varepsilon^{\prime}}
=\Pi^{\varepsilon}\delta_{\varepsilon,\varepsilon^{\prime}}$, where $\delta_{\varepsilon,\varepsilon^{\prime}} = 1$ if $\varepsilon = \varepsilon^{\prime}$ and $\delta_{\varepsilon,\varepsilon^{\prime}} = 0$ otherwise.

We now recall how to encode the classical state of a clock into a quantum system. Inspired by the Feynman-Kitaev history state formalism \cite{feynman2, F86, kitaev}, as well as the quantum time proposal of \cite{GLM15}, we introduce a register $T$ for storing the time, which can be interpreted as a background reference clock. A measurement on the time register is treated in this framework as a time measurement.
 Let $\mathcal{T} = \{t_1, \ldots, t_K\}$ denote a set of times, for integer $K\geq 2$, such that $t_k \in \mathbb{R}$ for all $k \in \{1, \ldots, K\}$, and $t_1 \leq t_2 \leq \ldots \leq t_K$.
 We suppose that the register $T$ has a complete, discrete, and orthonormal basis $\left\{  |t_k\rangle\right\}  _{k=1}^{K}$. The time values need not be evenly spaced, which means that the basis for register $T$ can include any combination of $|\mathcal{T}| = K$ distinct and orthonormal kets.

Now consider a clock system $A$ that may initially be correlated to a memory system $R$, together in a joint state $\rho_{AR}$ with $\rho_A = \Tr_R{(\rho_{AR})}$. Let random variable $E$ capture the outcomes of an energy measurement on the system $A$. The outcomes can be stored in a classical register, which we also denote without ambiguity by $E$ in what follows. To quantify energy uncertainty, we employ the R\'enyi conditional entropy  $S_{\alpha}(E|R)$ (defined below) of the following classical-quantum state:
\begin{align}
\omega_{ER}
\equiv\sum_{\varepsilon \in \mathcal{E}}
|\varepsilon\rangle\langle \varepsilon|_{E}\otimes\operatorname{Tr}%
_{A}\{\Pi^{\varepsilon}_A\rho_{AR}\},
\label{eq:omega-state}
\end{align}
where the kets $\{\ket{\varepsilon}\}_{\varepsilon \in \mathcal{E}}$ are orthonormal, obeying $\langle \varepsilon' | \varepsilon \rangle = \delta_{\varepsilon' , \varepsilon}$, and thus serve as classical labels for the energies of the Hamiltonian. To quantify the time uncertainty, we employ the R\'enyi conditional entropy $S_{\alpha}(T|A)$ of the following classical-quantum state:
\begin{align}
\kappa_{TA}
&
\equiv\frac{1}{|\mathcal{T}|}\sum_{k=1}^{K}|t_k\rangle\langle t_k|_{T}\otimes e^{-iHt_k}\rho_{A}e^{iHt_k}.
\label{eq:kappa-state}
\end{align}
In the above and henceforth, we set $\hbar = 1$.  The state $\kappa_{TA}$ can be interpreted as the joint state of system $A$ (the local quantum clock) and the background reference clock $T$, at an unknown time $t_k\in \mathcal{T}$ chosen according to the uniform distribution.  Equivalently, this state  can be understood as a time-decohered version of the Feynman-Kitaev history state \cite{feynman2, F86, kitaev}, the latter of which has the entire history of the state $\rho_A(t)$ encoded and entangled with a time register in superposition. The classical-quantum states in \eqref{eq:omega-state} and \eqref{eq:kappa-state} are in one-to-one correspondence with the following labeled ensembles, respectively:
\begin{align*}
\{ p(\varepsilon), \ |\varepsilon\rangle\langle \varepsilon|_{E} \otimes \operatorname{Tr}%
_{A}\{\Pi^{\varepsilon}_A\rho_{AR}\} / p(\varepsilon)\}_{\varepsilon \in \mathcal{E}},\\
\{ 1/ |\mathcal{T}|, \ |t_k\rangle\langle t_k|_{T} \otimes e^{-iHt_k}\rho_{A}e^{iHt_k}\}_{t_k \in \mathcal{T}}, 
\end{align*}
where $p(\varepsilon) = \operatorname{Tr}\{\Pi^{\varepsilon}_A\rho_{AR}\}$.

\textit{R\'enyi entropies}---For a probability distribution $\{p_j\}$, the R\'enyi entropies are defined for $\alpha \in (0,1)\cup(1,\infty)$ by
$
S_{\alpha}(\{p_j\} ) = \frac{1}{1-\alpha} \log_2 \sum_j p_j^{\alpha}\,,
$
and for $\alpha\in\{0,1,\infty\}$ in the limit.
This entropy family is generalized to quantum states via the sandwiched R\'enyi conditional entropy \cite{Mueller2013}, defined for a bipartite state $\rho_{AB}$ with $\alpha \in (0,\infty]$ as
\begin{equation}
S_{\alpha}(A|B)_{\rho} = - \inf_{\sigma_B} D_{\alpha}(\rho_{AB} \Vert I_A \otimes \sigma_B) ,
\end{equation}
where the optimization is with respect to all density operators $\sigma_B$ on system $B$. The quantity $S_{\alpha}(A|B)_{\rho}$ is in turn defined from the sandwiched R\'enyi relative entropy of a density operator $\xi$ and a positive semi-definite operator~$\zeta$, which is defined for $\alpha\in(0,1)\cup(1,\infty)$ as \cite{Mueller2013,WWY13}
\begin{equation}
D_{\alpha}(\xi \Vert \zeta) = 
      \frac{1}{\alpha-1} \log_2  \text{Tr}\big[ (\zeta^{\frac{1-\alpha}{2\alpha}} \xi \zeta^{\frac{1 - \alpha}{2 \alpha}} )^\alpha \big] .
\end{equation}
If $\alpha>1$ and the support of $\xi$ is not contained in the support of $\zeta$, then it is defined to be equal to $+\infty$. The sandwiched R\'enyi relative entropy $D_{\alpha}(\xi \Vert \zeta)$ is defined for $\alpha \in \{1,\infty\}$ in the limit.

\textit{Entropic energy-time uncertainty relation}---Let us now state our uncertainty relation for energy and time. For a pure state $\rho_A=\dya{\psi}_A$ uncorrelated with a reference system $R$, it is as follows:
\begin{gather}
    S_{\alpha}(T|A)_{\kappa} + S_{\beta}( \{p(\varepsilon) \} ) \geq\log_2 |\mathcal{T}|,
\label{eq:pure-state-inequality}
\end{gather}
holding for all $\alpha\in [1/2,\infty]$, with $\beta$ satisfying $1/\alpha + 1/\beta =2$, 
where $p(\varepsilon) = \langle \psi | \Pi^{\varepsilon}_A | \psi \rangle$. The above inequality \eqref{eq:pure-state-inequality} is saturated, e.g., when $\ket{\psi}$ is an energy eigenstate. Such states also maximize the time uncertainty, $S_{\alpha}(T|A)_{\kappa} = \log_2 |\mathcal{T}|$, since they are stationary states. 

The concavity of entropy and concavity of conditional entropy \cite{PhysRevLett.30.434} then directly imply that the same inequality in \eqref{eq:pure-state-inequality} holds for a mixed state uncorrelated with a reference system $R$. However, if $\rho_A$ is a maximally mixed state, the inequality in \eqref{eq:pure-state-inequality} yields a trivial bound on the total uncertainty. This is because the inequality does not capture the inherent uncertainty of the initial state.


One of our main results remedies this deficiency, capturing the inherent uncertainty mentioned above and holding nontrivially for mixed states:
\begin{equation}
   S_{\alpha}(T|A)_{\kappa} + S_{\beta}(E|R)_{\omega} \geq\log_2 |\mathcal{T}|. \label{eq:energy-time-uncertainty}
\end{equation}
The entropic energy-time uncertainty relation in \eqref{eq:energy-time-uncertainty} holds for all $\alpha \in [1/2,\infty]$, where $\beta$ satisfies
$1/\alpha + 1/\beta = 2$, with the proof given in Appendix \ref{appB}. The quantity $S_{\alpha}(T|A)_{\kappa}$ represents the uncertainty about the time $t_k$ from the perspective of someone holding the $A$ system of the state $\kappa_{TA}$ in \eqref{eq:kappa-state}. The quantity $S_{\beta}(E|R)_{\omega}$, which is determined by the state $\rho_{AR}$ and the Hamiltonian $H_A$,  represents the uncertainty about the outcome of an energy measurement from the perspective of someone who possesses the $R$ system of the state $\omega_{ER}$ in \eqref{eq:omega-state}. In the case that $\rho_{AR}$ is pure, then the quantity $S_{\beta}(E|R)_{\omega}$ is determined by the reduced state $\rho_{A}$ and the Hamiltonian $H_A$. According to \eqref{eq:energy-time-uncertainty}, a good quantum clock state $\rho_A$, for which $S_{\alpha}(T|A)_{\kappa} \approx 0$, necessarily has a large uncertainty in the energy measurement, in the sense that $S_{\beta}(E|R)_{\omega} \gtrsim \log_2 |\mathcal{T}|$. Conversely, a state with a small uncertainty in the energy measurement, in the sense that $S_{\beta}(E|R)_{\omega}
 \approx 0$, is necessarily a poor quantum clock state, in the sense that $S_{\alpha}(T|A)_{\kappa} \approx \log_2 |\mathcal{T}|$.
 
Note that the uncertainties in \eqref{eq:energy-time-uncertainty} are entropic and hence do not quantify the uncertainties of time and energy in their units, but rather the amount of information (in bits) that we do not know about the respective quantities. For example, if a system can equally likely take on one of two energies $E_1$ and $E_2$, then the entropic uncertainty in energy constitutes only one bit, and it does not depend on the magnitudes of $E_1$ or $E_2$. Each entropy in \eqref{eq:energy-time-uncertainty} is analogous to a guessing probability, which quantifies how well one can guess the time $t$ given the state $ \rho_A(t)$, or the energy given the ability to measure a memory system $R$. In fact, $S_{\alpha}(A|B)$ converges to the negative logarithm of the guessing probability as $\alpha \rightarrow \infty$ \cite{Konig2009, Mueller2013}.

Considering the special case of  $|\mathcal{T}|=2 $, one finds a simple, yet interesting corollary of \eqref{eq:energy-time-uncertainty}:  under the Hamiltonian $H_A$, a quantum state $\rho_A$ can evolve to a perfectly distinguishable state, only if $S_{\beta}(E|R)_{\omega}\ge 1 $ for  $\beta \in [1/2,\infty]$. In other words, for  $S_{\beta}(E|R)_{\omega}< 1 $, the orthogonalization time $\tau$  in the Mandelstam-Tamm bound 
is infinite, which cannot be seen using
Mandelstam-Tamm or other standard quantum speed limits.

By means of a quantum memory, one can also reduce the time uncertainty instead of  only reducing the energy uncertainty. This can be accomplished by considering the memory system $R$ to be a bipartite system $R_1 R_2$. One can then write the uncertainty relation in \eqref{eq:energy-time-uncertainty} as follows:
\begin{equation}
S_{\alpha}(T|AR_1)_{\kappa} + S_{\beta}(E|R_2)_{\omega} \geq\log_2 |\mathcal{T}|, \label{eq:memory-reduced-inequality}
\end{equation}
with full details given in Appendix \ref{appTimeUncertainty}. This shows that the tightening of \eqref{eq:pure-state-inequality} to give \eqref{eq:energy-time-uncertainty} using quantum memory can reduce the uncertainties in both energy and time. We note that this rewriting is achieved only by relabeling systems, and is thus a consequence of our earlier result in \eqref{eq:energy-time-uncertainty}.

An important special case of \eqref{eq:energy-time-uncertainty} is $\alpha =\beta =  1$ where both entropies are the von Neumann conditional entropy. This results in the following entropic uncertainty relation:
\begin{gather}
S(T|A)_{\kappa} + S(E|R)_{\omega} \geq \log_2 |\mathcal{T}|,
\label{eq:von-neumann-inequality}
\end{gather}
where the von Neumann conditional entropy of a bipartite state $\tau_{CD}$ can be written as
$
S(C|D)_{\tau} = -\operatorname{Tr}[\tau_{CD} \log_2 \tau_{CD}] + \operatorname{Tr}[\tau_{D} \log_2 \tau_{D}]$.
In fact, we show in Appendix~\ref{appD} of the Supplementary Material that the following equality holds for the von Neumann case when $\rho_{AR}$ is pure:
\begin{equation*}
S(T|A)_{\kappa} + S(E|R)_{\omega}  =  \log_2 |\mathcal{T}|  + D(\kappa_{A} \Vert \sum_{\varepsilon} \Pi^{\varepsilon}\rho_A \Pi^{\varepsilon} ) .
\end{equation*}
As discussed in the Supplementary Material (Appendix~\ref{appD}), when $\rho_{AR}$ is pure, equality in \eqref{eq:von-neumann-inequality} is achieved  [equivalently, $D(\kappa_{A} \Vert \sum_{\varepsilon} \Pi^{\varepsilon}\rho_A \Pi^{\varepsilon} )=0$] if and only if 
\begin{align}
\frac{1}{| \mathcal{T} |}\sum_{k = 1}^{K}e^{-iHt_k}\rho_A e^{iHt_k}  =  \sum_{\varepsilon} \Pi^{\varepsilon}\rho_A \Pi^{\varepsilon} \,.
\label{eq:vn_equalitycondition}
\end{align}
One way to satisfy \eqref{eq:vn_equalitycondition} is if $[\rho_A,H]=0$, and hence the relation is tight for states $\rho_A$ that are diagonal in the energy eigenbasis. Another way to satisfy \eqref{eq:vn_equalitycondition} is if 
$
  \frac{1}{|\mathcal{T}|} \sum_{k = 1}^{K}  e^{i (\varepsilon - \varepsilon')t_k}  = \delta_{\varepsilon, \varepsilon'}
$
for all combinations of $\varepsilon, \varepsilon'$. If the $ | \mathcal{T} | $ times are equally spaced, this implies that 
$
  e^{i (\varepsilon - \varepsilon') t_K} = 1$ and $ 
  (\varepsilon - \varepsilon') t_K = 2 \pi$.
This can be understood as an exact inverse relationship between the conjugate variables, which is a signature of a saturated uncertainty relation.

We remark that \eqref{eq:von-neumann-inequality} can be generalized to allow for non-uniform probabilities for the various times. As shown in the Supplementary Material (Appendix~\ref{appD}), the right-hand-side of \eqref{eq:von-neumann-inequality} gets replaced by the entropy $S(T)_{\kappa}$ of the time distribution  for this generalization.

\textit{Relative entropy of asymmetry formulation}---As shown in the Supplementary Material (Appendix~\ref{app:rel-ent-assym-to-cond-ent}), an alternative way of stating our main result in \eqref{eq:energy-time-uncertainty} is by employing the sandwiched R\'enyi relative entropy of asymmetry \cite{GJL17}, which generalizes an asymmetry measure put forward in \cite{GMS09}:
\begin{gather}
  S_{\alpha}(T|A)_{\kappa}+\inf_{\sigma:\left[  H,\sigma \right] =0} D_{\alpha}(\rho\Vert\sigma)\geq\log_2 |\mathcal{T}|, \label{eq:min-rel-ent-inequality}
\end{gather}
and holds for all $\alpha \in (0,\infty]$.
The inequality in \eqref{eq:min-rel-ent-inequality} delineates a trade-off, given the Hamiltonian $H$, between how well a state $\rho_A$ can serve as a quantum clock and the asymmetry of $\rho_A$ with respect to time translations. Moreover, this connection is exact for pure states. That is, a good quantum clock state $\rho_A$, for which $S_{\alpha}(T|A)_{\kappa} \approx 0$, is necessarily asymmetric with respect to time translations, in the sense that $\inf_{\sigma:\left[  H,\sigma\right] =0} D_{\alpha}(\rho\Vert\sigma) \gtrsim \log_2 |\mathcal{T}|$, which follows by employing \eqref{eq:min-rel-ent-inequality}. Conversely, a state that is nearly symmetric with respect to time translations, in the sense that $\inf_{\sigma:\left[  H,\sigma\right] =0} D_{\alpha}(\rho\Vert\sigma) \approx 0$, is necessarily a poor quantum clock state, in the sense that $S_{\alpha}(T|A)_{\kappa} \approx \log_2 |\mathcal{T}|$. This provides an alternate way of understanding the energy uncertainty of a pure state as asymmetry with respect to the generator of time evolutions. Note that $\log_2 |\mathcal{T}|-S_{\alpha}(T|A)_{\kappa}$ itself is a measure of asymmetry. Therefore,  \eqref{eq:min-rel-ent-inequality} can be interpreted as an upper bound on this measure of asymmetry in terms of R\'enyi relative entropy of asymmetry.

In the limit $\alpha \rightarrow 1$, the quantity $\inf_{\sigma:\left[  H,\sigma\right] =0}D_{\alpha}(\rho\Vert\sigma)$ reduces to
the relative entropy of asymmetry \cite{GMS09}
\begin{align}
\lim_{\alpha \to 1}
\inf_{\sigma:\left[  H,\sigma\right] =0} D_{\alpha}(\rho\Vert\sigma)
& = 
\inf_{\sigma:\left[  H,\sigma\right] =0} D(\rho\Vert\sigma) \notag \\
& \equiv 
\Gamma_H(\rho) = S(\Delta(\rho)) - S(\rho),
\end{align}
where the quantum relative entropy is defined as $D(\rho\Vert\sigma) \equiv \operatorname{Tr}[\rho [ \log_2 \rho - \log_2 \sigma]]$ \cite{U62} and $\Delta(\rho) = \sum_{\varepsilon \in \mathcal{E}} \Pi^{\varepsilon} \rho \Pi^{\varepsilon}$ (in the context of asymmetry, the function $S(\Delta(\rho)) - S(\rho)$ was first studied in \cite{vac2008}).
 Then the entropic uncertainty relation in \eqref{eq:min-rel-ent-inequality} reduces to
$
  S(T|A)_{\kappa} + \Gamma_H(\rho) \geq \log_2 |\mathcal{T}|. \label{eq:rea-inequality}
$

\begin{figure}
\begin{center}
\includegraphics[width=\columnwidth]{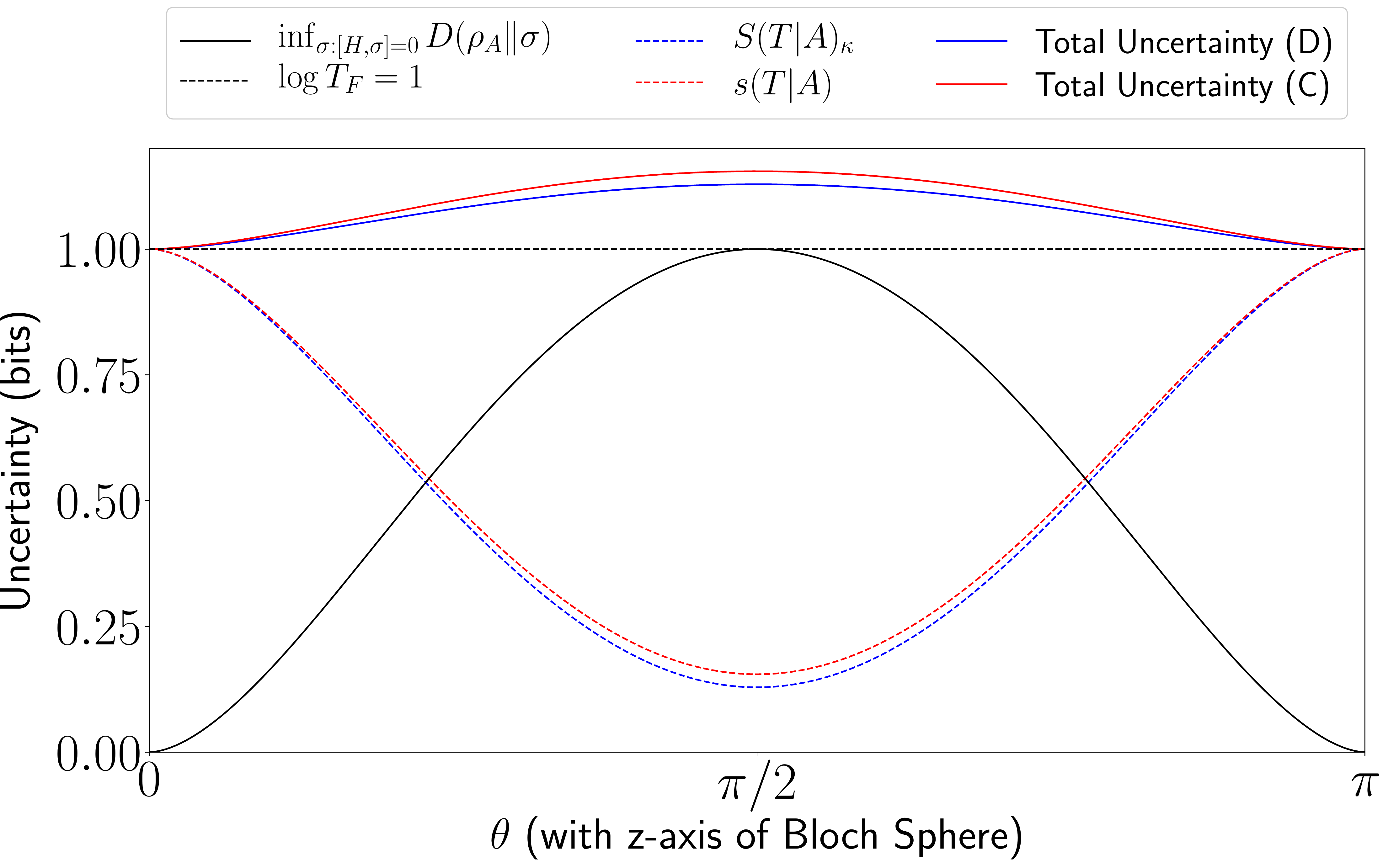}
\caption{Our uncertainty relations applied to a spin-1/2 particle in a magnetic field. For Hamiltonian $H = \kappa \sigma_z$ and  $| \mathcal{T} | = 2$ (in the discrete-time case) or $T_F = 2$ (in the continuous-time case), the plot shows the variation in the uncertainties with $\theta$, the angle the state makes with the $z$-axis of the Bloch sphere. The quantity $\inf_{\sigma:\left[  H,\sigma\right] =0} D(\rho_A\Vert\sigma)$ is the energy uncertainty, while $S(T|A)$ and $s(T|A)$ are respectively the time uncertainties for discrete and continuous time. The black dotted line shows $ \log_2 | \mathcal{T} | = \log_2 T_F  $, i.e., our lower bounds on the total uncertainty from \eqref{eq:energy-time-uncertainty} and \eqref{eq:continuous-time}.}
\label{fig:single-spin-uncertainty}
\end{center}
\end{figure}

\textit{Extension to continuous time}---We now extend the uncertainty relation in \eqref{eq:energy-time-uncertainty} so that it is applicable to continuous, as opposed to discrete, time, and to Hamiltonians with countable spectrum. From \eqref{eq:energy-time-uncertainty} and \cite{furrer}, we derive an inequality applicable to the von Neumann entropies. Full details are available in the Supplementary Material (Appendix~\ref{app:contin-time-extension}). Consider time to be continuous in the interval $\lbrack 0, T_F \rbrack$. Given a state $\rho_A$ and a Hamiltonian $H_A=\sum_{\varepsilon\in\mathcal{E}}\varepsilon\Pi_A^{\varepsilon}$, with $\mathcal{E}$ countably infinite, we then have that
\begin{gather}
  \inf_{\sigma:\left[  H,\sigma\right] =0} D(\rho_A\Vert\sigma) + s(T|A) \geq \log_2 T_F. \label{eq:continuous-time}
\end{gather}
For a continuously parametrized ensemble of states $\{p(x), \rho^x_B\}_{x\in \mathcal{X}}$, the differential conditional quantum entropy $s(X|B)$ is defined as 
$
s(X|B) = - \int_{\mathcal{X}} dx \ D(p(x) \rho^x_B \Vert \rho_{\text{avg}})$,
where $\rho_{\text{avg}} = \int_{\mathcal{X}} dx \ p(x) \rho^x_B$ \cite{furrer}. For our case, this means
\begin{align*}
s(T|A) & =-\int_{0}^{T_{F}}dt\ D(\rho(t)/T_{F}\Vert{\overline{\rho}}), \\
\overline{\rho} & =\frac{1}{T_{F}}\int_{0}^{T_{F}}dt\ e^{-iHt}\rho_A e^{iHt}.
\end{align*} 
We note here that there is an alternative way of phrasing the inequality in \eqref{eq:continuous-time} in dimensionless units  \footnote{The continuous time inequality can be rewritten as 
\unexpanded{$\inf_{\sigma:\left[  H,\sigma\right] =0} D(\rho_A\Vert\sigma)  \geq \log_2 (T_F/ 2^{s(T|A)})}$}.

\textit{Example: Spin in a magnetic field}---Consider a spin-$1/2$ particle in a magnetic field $ \mathbf{B} = B \hat{\mathbf{z}}$. This is described by the Hamiltonian $\hat{H} = \kappa \sigma_z$, where $\kappa$ is a constant proportional to $B$, and $\sigma_z$ is the $z$-Pauli operator. Consider a pure state $\rho_A = \dya{\psi(0)}$ that makes an angle $\theta$ with the $z$-axis of the Bloch sphere, given by $\ket{\psi(0)} = \cos(\theta/2) \ket{0} + \sin(\theta/2) \ket{1}$. After a time $t$, this state evolves to $\ket{\psi(t)} = e^{-i H t} \ket{\psi(0)}$. Figure~\ref{fig:single-spin-uncertainty} plots the variation of the uncertainty (time, energy, and total uncertainty) with $\theta$ for both our discrete- and continuous-time relations. For $\theta = \pi/2$, the energy uncertainty is maximal (one bit) while the time uncertainty is minimal (although still non-zero in this example). At the other extreme, for $\theta = 0$ or $\pi$, the energy uncertainty is zero while the time uncertainty is maximal (one bit), meaning that clock's time states cannot be distinguished. One can see in Figure~\ref{fig:single-spin-uncertainty} that our uncertainty relation is tight in this extreme case.

\textit{Discussion}---In this paper, we gave a conceptually clear and operational formulation of the energy-time uncertainty principle. We stated an entropic energy-time uncertainty relation for the R\'enyi entropies for discrete time sets. This relation was strengthened for mixed states by allowing the observer to possess a quantum memory, a feature that also allowed us to reinterpret our relation as a bound on the relative entropy of asymmetry. For the special case of von Neumann entropy, we extended our uncertainty relation to continuous time sets. Our relation is saturated for all states $\rho_A$ that are diagonal in the energy eigenbasis.

Expressed in terms of entropies, which are operationally important in information theory, our result should have uses in various tasks. Entropic uncertainty relations have been used previously to certify randomness and prove security of quantum cryptography protocols, and we believe our result will be an important tool used to develop such protocols further.


\begin{acknowledgments}
PJC acknowledges support from the Los Alamos National Laboratory ASC Beyond Moore's Law project.  VK acknowledges support from the Department of Physics and Astronomy at Louisiana State University. SL was supported by IARPA under the QEO program, NSF, and ARO under the Blue Sky Initiative. MMW acknowledges support from the US National Science Foundation through grant no.~1714215. MMW is grateful to SL for hosting him for a research visit to University of Oxford during May 2018, during which some of this research was conducted.  
\end{acknowledgments}
\bibliographystyle{unsrt}

\pagebreak
\onecolumngrid

\appendix
\vspace{.4in}
\begin{center}
\textbf{SUPPLEMENTARY MATERIAL}
\end{center}
\section{Applications} \label{appF}

\subsection{Quantum speed limits}

In this section, we discuss the application of our uncertainty relation to quantum speed limits. Recall that the Mandelstam-Tamm speed limit \cite{mandelstamm-tamm}
has the form $\tau \Delta E\geq \frac{\pi \hbar}{2}$.

Our uncertainty relation gives a strong constraint on the time $\tau$ that appears in quantum speed limits, as follows. Specializing our discrete-time uncertainty relation to the case of $|\mathcal{T}| = 2$ gives
\begin{equation}
   S_{\alpha}(T|A)_{\kappa} \geq 1 -  S_{\beta}(E|R)_{\omega} . 
  \label{eq:ourspeedlimit}
\end{equation}
Here we can write $\mathcal{T} = \{0,t\}$. In this scenario, $S_{\alpha}(T|A)_{\kappa} =0 $ if and only if $\rho_A$ is orthogonal to $e^{-iHt}\rho_Ae^{iHt}$. Hence, the uncertainty relation in \eqref{eq:ourspeedlimit} implies the following:
\begin{equation}
   \tau \to \infty \quad \text{if}\quad S_{\beta}(E|R)_{\omega} < (1 \text{ bit})\,. 
  \label{eq:ourspeedlimit2}
\end{equation}
Here one has freedom to choose $R$ to be any quantum memory system, and one can also replace $S_{\beta}(E|R)_{\omega}$ with the relative entropy of asymmetry, $\inf_{\sigma:\left[  H,\sigma \right] =0} D_{\alpha}(\rho\Vert\sigma)$. 

Equation~\eqref{eq:ourspeedlimit2} states that the orthogonalization time $\tau$ that appears in the Mandelstam-Tamm speed limit must go to infinity if the uncertainty $S_{\beta}(E|R)_{\omega}$ is less than one bit. This is a novel insight that does not follow from the Mandelstam-Tamm speed limit or other standard speed limits. Note that an information-theoretic approach to energy uncertainty (as opposed to the standard deviation that appears in the Mandelstam-Tamm speed limit) is necessary to obtain this insight, since the condition in \eqref{eq:ourspeedlimit2} is stated in terms of bits of information.

The above constraint can be generalized to multiple times, via our uncertainty relation. Define $\tau_K$ to be the time needed for the set $\{e^{-iH (k-1) \tau_K}\rho_A  e^{iH (k-1) \tau_k}\}_{k=1}^{K}$ to be composed of $K$ mutually orthogonal states. For this multi-time scenario, we obtain a generalization of \eqref{eq:ourspeedlimit2} as follows:
\begin{equation}
   \tau_K \to \infty \quad \text{if}\quad S_{\beta}(E|R)_{\omega} < (\log_2 K \text{ bits})\,.
  \label{eq:ourspeedlimit3}
\end{equation}
Hence, at the conceptual level our uncertainty relation not only constrains the $\tau$ appearing in quantum speed limits, but also constrains a more general scenario (i.e., a multi-time scenario) than is typically considered in quantum speed limits.

These insights inspire the following potential research directions: (1) Unify our information-theoretic constraint on $\tau$ in \eqref{eq:ourspeedlimit2} with standard quantum speed limits in order to obtain a stronger quantum speed limit, and (2) Formulate quantum speed limits for the more general scenario considered in \eqref{eq:ourspeedlimit3} involving multiple times.

\subsection{Randomness extraction}

Entropic uncertainty relations can certify that the bits extracted from a measurement are truly random from the perspective of an adversary. This adversary may have even supplied the quantum system to be measured and hence may have some background information about the state of this system. 

The applications of entropic uncertainty relations to randomness extraction are reviewed in \cite{entropic-ur-review}. One can specialize a R\'enyi entropic uncertainty relation to the min- and max-entropies, corresponding to setting $\alpha$ and $\beta$ to $\infty$ and $1/2$, in either order. For example, our uncertainty relation, in the discrete-time case, becomes
\begin{align}
   &S_{\max}(T|A)_{\kappa} + S_{\min}(E|R)_{\omega} \geq\log_2 |\mathcal{T}|, \label{eq:App-energy-time-uncertainty1}\\
   &S_{\min}(T|A)_{\kappa} + S_{\max}(E|R)_{\omega} \geq\log_2 |\mathcal{T}|, \label{eq:App-energy-time-uncertainty2}
\end{align}
where $S_{\min}$ and $S_{\max}$ refer to the min- and max-entropies, respectively.

The min-entropy $S_{\min}$ has operational significance in the task of randomness extraction via the Leftover Hashing Lemma~\cite{Impagliazzo1989, Bennett1995}. This lemma states that, if the initial min-entropy for the random variable $X$ is sufficiently large, then there exists a family of hash functions $\{ f_s \}_{s}$ such that the random variable $L = f_S(X)$, resulting from applying $f_S$ with a seed $S$ chosen uniformly at random, is approximately uniform and independent of $S$. 

Our uncertainty relations in \eqref{eq:App-energy-time-uncertainty1} and \eqref{eq:App-energy-time-uncertainty2} provide lower bounds on the min-entropy, which in turn allow one to certify randomness via the Leftover Hashing Lemma. One can either extract randomness from the energy variable $E$ or the time variable $T$, as suggested by \eqref{eq:App-energy-time-uncertainty1} and \eqref{eq:App-energy-time-uncertainty2}, respectively.

Randomness extraction from the polarization degree-of-freedom of single photons and pairs of photons was experimentally demonstrated in \cite{Vallone2014}. The randomness was certified using the min- and max-entropic uncertainty relation, and the experiment obtains approximately one bit of randomness per signal for single photons, and two bits of randomness per signal for pairs of photons. More recent work similarly employed the min- and max-entropic uncertainty relation but for the position and momentum observables, leading to a significantly higher rate of randomness \cite{Marangon2017}.  

Our work likewise allows one to achieve high rates of randomness (multiple bits per signal), as illustrated in the following example where randomness is extracted from the energy measurement. Suppose Alice receives coherent-state pulses from an untrusted source. This source randomly applies a time delay to each pulse, which adds a random phase to the coherent state. When Alice receives the pulse, she flips a coin. If she gets heads, she does a time measurement, which corresponds to extracting the phase of the coherent state by interfering it with a phase reference. If she gets tails, she does an energy measurement, which corresponds to measuring in the number basis with a photon-number-resolving detector. Alice uses her time measurement data to estimate $S_{\max} (T|A)$, which then allows her to lower bound $S_{\min}(E|R)$ via \eqref{eq:App-energy-time-uncertainty1} and hence to certify randomness extracted from the energy measurement via the Leftover Hashing Lemma. One can assume an adversary has possession of the $R$ system, and hence the randomness is certified to be secure even though the adversary has background information about the signal state. 

The above protocol has the advantage that coherent-state pulses are easily produced experimentally. In addition, multiple bits of randomness can be extracted per pulse. In particular, one can extract $\log_2 |\mathcal{T}|$ bits of randomness per pulse, where $|\mathcal{T}|$ can be chosen such that the coherent state produced by the source evolves to $|\mathcal{T}|$ fully distinguishable quantum states under the action of a time delay.  

\subsection{Quantum key distribution}

Protocols for quantum key distribution (QKD) involving the energy/time variables have previously been proposed, implemented, and analyzed \cite{Zhang2014, Niu2016, Qi2006}. These protocols considered the time variable in the context of photon arrival time, where the arrival time is measured with a time-resolving photon detector. Our uncertainty relation allows us to consider QKD protocols with other kinds of time encodings (besides arrival time). 

For example, in the context of coherent states, time delays map onto the phase of the coherent state, and for this example our uncertainty relation essentially becomes a number-phase uncertainty relation. We mentioned this above in the case of randomness extraction, and similarly one can formulate a number-phase QKD protocol. Here, Alice prepares a coherent state $\ket{\alpha}$ and then either encodes in time (for which she applies a random time delay) or in energy (for which she randomly prepares a number state with probability chosen according to the Poisson probability distribution associated with the coherent state $\ket{\alpha}$). She sends the resulting state over an insecure quantum channel to Bob, who then either tries to decode the time (by interfering the pulse with a phase reference) or the energy (by measuring with a photon-number-resolving detector). Potential benefits of this sort of QKD protocol would be the multiple bits of secure key obtained per pulse that Alice prepares.

Protocols involving other source states could be considered as well. For example, instead of preparing a coherent state, Alice could prepare a superposition of two number states: $ \ket{\psi} = \alpha \ket{n_1} + \beta \ket{n_2}$.  Similar to the above QKD protocol involving coherent states, Alice either applies a random time delay to $\ket{\psi}$ or she randomly prepares one of the number states $\ket{n_1}$ or $\ket{n_2}$ (with probabilities $|\alpha|^2$ and $|\beta |^2$, respectively). She sends the resulting state to Bob who either decodes the time or the energy.

In the aforementioned QKD protocols, Alice and Bob can either distill secret key out of their time data, energy data, or both. By bounding the information that the eavesdropper has about the secret key, our uncertainty relation can allow one to prove the security of such a protocol.

\section{
Alternate version of the guessing game}

\label{appA}

Figure \ref{fig:guessing-game} describes a guessing game to better understand the trade-off between energy and time uncertainties. The game described earlier can be modified slightly with no change to the physical outcome. We first note that the result of an energy measurement on the state $\rho_A$ is the same as the result of an energy measurement on the state $e^{-i H t} \rho_A e^{i H t}$. This lets us restate the steps of the game as follows.

Alice applies one of $| \mathcal{T} |$ time evolutions on the state $\rho_A$ that she receives from Bob. She then flips a coin. If she obtains heads, she performs an energy measurement and sends the state back to Bob, who must then guess the outcome of Alice's energy measurement. If Alice obtains tails, she sends the state back to Bob, who must guess which of the $| \mathcal{T} |$  time evolutions was applied.

Everything stays the same as the game described in the main text, except for the fact that Alice applies a time evolution according to the Hamiltonian of system $A$ regardless of her coin toss outcome.

\section{
Proof of Eq.~\eqref{eq:min-rel-ent-inequality}}\label{appB}

In this appendix, we prove the entropic uncertainty relation in \eqref{eq:min-rel-ent-inequality}, which we repeat here for convenience:
\begin{gather}
    S_{\alpha}(T|A)_{\kappa}+\inf_{\sigma:\left[  H,\sigma\right] =0}D_{\alpha}(\rho\Vert\sigma)\geq\log_2 | \mathcal{T} |
\end{gather}
Consider that%
\begin{equation}
\inf_{\sigma:[H,\sigma]=0}D_{\alpha}(\rho\Vert\sigma)=\inf
_{\sigma}D_{\alpha}\!\left(\rho\middle \Vert\sum_{\varepsilon}\Pi^{\varepsilon
}\sigma\Pi^{\varepsilon}\right).
\end{equation}
For a fixed state $\sigma$, consider that%
\begin{align}
 D_{\alpha}\!\left(\rho\middle\Vert\sum_{\varepsilon}\Pi^{\varepsilon}\sigma
\Pi^{\varepsilon}\right)  & =D_{\alpha}\!\left(\rho\otimes\pi_{T}\middle\Vert
\sum_{\varepsilon}\Pi^{\varepsilon}\sigma\Pi^{\varepsilon}\otimes\pi_{T}\right)\\
& =D_{\alpha}\!\left(\frac{1}{ | \mathcal{T} | }\sum_{t}e^{-iHt}\rho e^{iHt}%
\otimes|t\rangle\langle t|_{T}\middle\Vert\sum_{\varepsilon}\Pi^{\varepsilon}%
\sigma\Pi^{\varepsilon}\otimes\pi_{T}\right)\\
& =D_{\alpha}\!\left(\frac{1}{ | \mathcal{T} | }\sum_{t}e^{-iHt}\rho e^{iHt}%
\otimes|t\rangle\langle t|_{T}\middle\Vert\sum_{\varepsilon}\Pi^{\varepsilon}%
\sigma\Pi^{\varepsilon}\otimes I_{T}\right)  + \log_2 | \mathcal{T} | \\
& \geq\inf_{\sigma}D_{\alpha}\!\left(\frac{1}{ | \mathcal{T} | }\sum_{t}e^{-iHt}\rho
e^{iHt}\otimes|t\rangle\langle t|_{T}\middle\Vert\sigma\otimes I_{T}\right)+\log_2 | \mathcal{T} | \\
& =-S_{\alpha}(T|A)_{\kappa}+\log_2 | \mathcal{T} |.
\end{align}
The first equality follows because the relative entropy is invariant under tensoring in the maximally mixed state~$\pi_T$. The second inequality follows because relative entropy is invariant with respect to a controlled unitary, which here is $\sum_{t}e^{-iHt}\otimes|t\rangle\langle t|_{T}=e^{-iH\otimes\hat{T}}$, with $\hat{T} = \sum_{t} t \ket{t}\! \bra{t}$. Since the inequality holds for all states $\sigma$, and since $D_{\alpha} (\rho \Vert c \sigma) = D_{\alpha} (\rho \Vert \sigma) - \log_2 c$ for $c > 0$, we arrive at the claim in \eqref{eq:min-rel-ent-inequality}.

\section{Reducing Time Uncertainty with Quantum Memory}

\label{appTimeUncertainty}

In this appendix, we show that the tightening of the uncertainty relation in \eqref{eq:energy-time-uncertainty} with quantum memory can reduce entropic uncertainty in both energy and time. This follows by considering the memory system $R$ as a bipartite system $R_1 R_2$. We state \eqref{eq:energy-time-uncertainty} once again:
\begin{equation}
	S_{\alpha}(T|A)_{\kappa} + S_{\beta}(E|R)_{\omega} \geq\log_2 |\mathcal{T}|.
	\label{eq:appendix-memory-uncertainty}
\end{equation}

In the reformulated uncertainty relation, the Hamiltonian $H$ still acts non-trivially on system $A$ only, so that $H_{R_1 A} = I_{R_1} \otimes H_A$ and $H_A = \sum_{\varepsilon} \varepsilon \Pi_{A}^{\varepsilon}$. Thus the physical description of the composite system is unchanged. In the earlier description, the state $\kappa$ was defined on system $TA$ and $\omega$ on system $ER$. Now $\kappa$ is defined on system $TAR_1$ and $\omega$ on system $ER_2$. This leads to the entropic uncertainty relation \eqref{eq:appendix-memory-uncertainty} being rewritten as
\begin{equation}
	S_{\alpha} (T | AR_1)_{\kappa} + S_{\beta} ( E | R_2 )_{\omega}	\geq \log_2 | \mathcal{T} |.
\end{equation}

Thus we see now that our relation can be cast equivalently in the form above via a relabeling of the memory system $R$ as $R_1 R_2$. This implies that the reduction in uncertainty due to assistance of a quantum memory can manifest in both the entropic time and energy uncertainties.

\section{
Generalization to non-uniform time probabilities}

\label{appD}

In this appendix, we detail a particular generalization of the inequality in \eqref{eq:von-neumann-inequality}, which is \eqref{eq:energy-time-uncertainty} applied to the von Neumann entropies. The generalization involves a non-uniform distribution over the arbitrarily spaced times in the set $\mathcal{T}$. Instead of considering $| \mathcal{T} |$  uniformly weighted times in the state $\kappa_{TA}$, we can take the times to be weighted according to a probability mass function $p(k)$.

Consider a pure state $\ket{\psi}_{AR}$ with $\rho_A = \Tr_R(\dya{\psi})$. Also let
\begin{equation}
\kappa_{TA} =  \sum_{k=1}^{K} p(k)|k\rangle\langle k|_{T} \otimes e^{-iH t_k}\rho_A e^{iH t_k} ,\quad \text{with}\quad \kappa_{T} = \sum_{k=1}^{K}  p(k)|k\rangle\langle k| \, ,
\end{equation}
and
\begin{equation}
\omega_{ER}
\equiv\sum_{\varepsilon \in \mathcal{E}}
|\varepsilon\rangle\langle \varepsilon|_{E}\otimes\operatorname{Tr}%
_{A}\{\Pi^{\varepsilon}_A\psi_{AR}\}.
\end{equation}

Then
\begin{align}
S(E|R)_{\omega} &= D\Big(\rho_A \Vert \sum_{\varepsilon} \Pi^{\varepsilon}\rho_A \Pi^{\varepsilon}\Big)\\
& = D\Big(\kappa_{T} \otimes \rho_A   \Vert \sum_{\varepsilon} \kappa_{T} \otimes \Pi^{\varepsilon}\rho_A \Pi^{\varepsilon}  \Big)\\
& = D\Big(\kappa_{TA} \Vert \sum_{\varepsilon} \kappa_T \otimes \Pi^{\varepsilon}\rho_A \Pi^{\varepsilon} \Big)  \\
& = - S(\kappa_{TA}) - \Tr\Big(  \kappa_{TA} \log_2 \big(\sum_{\varepsilon} \kappa_T \otimes \Pi^{\varepsilon}\rho_A \Pi^{\varepsilon} \big) \Big) \\
& = - S(\kappa_{TA}) - \Tr\Big(  \kappa_{T} \log_2 \kappa_{T}\Big) - \Tr\Big(  \kappa_{A} \log_2 \big(\sum_{\varepsilon} \Pi^{\varepsilon}\rho_A \Pi^{\varepsilon}\big)\Big) \\
& = - S(\kappa_{TA}) + S(\kappa_{A}) - S(\kappa_{A}) - \Tr\Big(  \kappa_{A} \log_2 \sum_{\varepsilon} \Pi^{\varepsilon}\rho_A \Pi^{\varepsilon} \Big) + S(T)_{\kappa} \\
& = - S(T|A)_{\kappa} + D\Big(\kappa_{A} \Vert \sum_{\varepsilon} \Pi^{\varepsilon}\rho_A \Pi^{\varepsilon} \Big) + S(T)_{\kappa} ,
\end{align}
where the first equality can be shown, e.g., using Proposition \ref{prop1}: in the limit  $\alpha\rightarrow 1$, this  proposition implies $S(E|R)_\omega =\inf_{\sigma : [H_A,\sigma] = 0} D(\rho_A \Vert \sigma)$, which by the result of \cite{GMS09}, is equal to $\inf_{\sigma : [H_A,\sigma] = 0} D(\rho_A \Vert \sigma)= D(\rho_A \Vert  \sum_{\varepsilon} \Pi^{\varepsilon}\rho_A \Pi^{\varepsilon})$. 

Hence we obtain the following result:
\begin{align}
S(E|R)_{\omega} + S(T|A)_{\kappa} =  S(T)_{\kappa}  + D\Big(\kappa_{A} \Vert \sum_{\varepsilon} \Pi^{\varepsilon}\rho_A \Pi^{\varepsilon} \Big) .
\end{align}

Using the fact that relative entropy is non-negative,  this implies that 
\begin{align}
    S(E|R)_{\omega} + S(T|A)_{\kappa} \geq S(T)_{\kappa}  ,
\end{align}
where the inequality turns into an equality if and only if 
\begin{align}
\frac{1}{| \mathcal{T} |}\sum_{k = 1}^{K}e^{-iHt_k}\rho_A e^{iHt_k}  =  \sum_{\varepsilon} \Pi^{\varepsilon}\rho_A \Pi^{\varepsilon} \,.
\end{align}
Hence, for the von Neumann entropy case, our uncertainty relation is tight if and only if the above condition is satisfied. (One such case is when $[\rho_A, H]=0$.)

$S(T)_{\kappa} = S(\{ p(k)\})$ is the Shannon entropy of the probability distribution $\{p(k)\}$. So, in the special case of uniform probabilities, $p(k) = 1/| \mathcal{T} | $, we obtain again \eqref{eq:von-neumann-inequality}:
\begin{align}
S(E|R)_{\omega} + S(T|A)_{\kappa} \geq \log_2 | \mathcal{T} | \,.
\end{align}

\section{
Equivalence between \eqref{eq:energy-time-uncertainty} and \eqref{eq:min-rel-ent-inequality}}

\label{app:rel-ent-assym-to-cond-ent}

A consequence of the following proposition (by taking the $B$ system therein to be trivial) is that the quantity $S_{\beta}(E|R)_\omega $ in \eqref{eq:energy-time-uncertainty} is equal to the quantity $\inf_{\sigma : [H,\sigma] = 0} D_{\alpha}(\rho \Vert \sigma)$ in \eqref{eq:min-rel-ent-inequality}, whenever the state $\rho_{AR}$ is a pure state. As a result, the entropic uncertainty relations in \eqref{eq:energy-time-uncertainty} and \eqref{eq:min-rel-ent-inequality} are equivalent, whenever the state $\rho_{AR}$ is a pure state. The inequality in \eqref{eq:energy-time-uncertainty} holds for mixed $\rho_{AR}$ by purifying with an additional reference $R'$, invoking the result for pure bipartite states, and then applying the data processing inequality for conditional R\'enyi entropy \cite{Mueller2013} after a partial trace over $R'$. We note that this result is a generalization of a result in \cite{Coles2012}.

\begin{proposition} \label{prop1} Let $\psi_{ABC}$ be a pure tripartite state, and let $\{  \Pi_{A}^{j}\}
_{j}$ be a projective POVM (i.e., such that $\sum_{j}\Pi_{A}^{j}=I_{A}$ and $\Pi
_{A}^{j}\Pi_{A}^{j^{\prime}}=\delta_{j,j^{\prime}}\Pi_{A}^{j}$). Then%
\begin{equation}
\inf_{\sigma_{AB}}D_{\alpha}\!\left(\psi_{AB}\middle\Vert\sum_{j}\Pi_{A}%
^{j}\sigma_{AB}\Pi_{A}^{j}\right)=S_{\beta}(Z|C)_{\omega},
\end{equation}
where $\alpha\in[1/2,1)\cup(1,\infty]$, $\beta$ is such that $1/\alpha
+1/\beta=2$, and%
\begin{equation}
\omega_{ZC}\equiv\sum_{j}|j\rangle\langle j|_{Z}\otimes\operatorname{Tr}%
_{AB}\{\Pi_{A}^{j}\psi_{ABC}\},
\end{equation}
with $\{ \vert j \rangle_Z\}_j$ an orthonormal basis.
\end{proposition}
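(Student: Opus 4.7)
The plan is to reduce Proposition~\ref{prop1} to the known pure-state duality of the sandwiched R\'enyi conditional entropy, by dilating the pinching channel $\mathcal{M}(X):=\sum_j \Pi_A^j X\Pi_A^j$ to an isometry. The key object is the Stinespring-type isometry $U_{A\to AZ}:=\sum_j \Pi_A^j\otimes |j\rangle_Z$, which satisfies $U^\dagger U = I_A$. Three identities follow by direct computation: (i) $|\tilde\psi\rangle_{ZABC}:=U|\psi\rangle_{ABC}$ is a pure state on $ZABC$ whose $ZC$ marginal is $\omega_{ZC}$; (ii) $\tilde\psi_{ZAB}=U\psi_{AB}U^\dagger$; and (iii) $U^\dagger(I_Z\otimes\sigma_{AB})U=\mathcal{M}(\sigma_{AB})$ for every $\sigma_{AB}$. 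The ``pull-back'' identity (iii) is what drives the argument.

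The heart of the proof is the equality
\begin{equation*}
\inf_{\sigma_{AB}} D_\alpha(\psi_{AB}\Vert \mathcal{M}(\sigma_{AB}))=\inf_{\tau_{AB}}D_\alpha(\tilde\psi_{ZAB}\Vert I_Z\otimes \tau_{AB}).
\end{equation*}
Because $\mathcal{M}(\sigma)$ depends on $\sigma$ only through its diagonal blocks, the left infimum is attained among block-diagonal $\sigma$ (those fixed by $\mathcal{M}$). For such $\sigma$, the pure-state evaluation $D_\alpha(\psi\Vert\zeta)=\tfrac{\alpha}{\alpha-1}\log\langle\psi|\zeta^{(1-\alpha)/\alpha}|\psi\rangle$, together with $\mathcal{M}(\sigma^{p})=\sigma^{p}$, gives directly $D_\alpha(\psi_{AB}\Vert\sigma_{AB})=D_\alpha(\tilde\psi_{ZAB}\Vert I_Z\otimes\sigma_{AB})$, yielding one direction of the equality of infima. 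For the reverse direction I would apply the operator Jensen inequality for the unital CP map $\mathcal{M}$ to the power function $x\mapsto x^{(1-\alpha)/\alpha}$, which is operator convex for $\alpha\in(1,\infty]$ (exponent in $[-1,0)$) and operator concave for $\alpha\in[1/2,1)$ (exponent in $(0,1]$). The sign of the prefactor $\alpha/(\alpha-1)$ flips in exactly the same regime, so in both cases one concludes $D_\alpha(\tilde\psi_{ZAB}\Vert I_Z\otimes\tau_{AB})\geq D_\alpha(\psi_{AB}\Vert\mathcal{M}(\tau_{AB}))$ for every $\tau_{AB}$, which gives the matching inequality for the infima.

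With this reduction in hand, the result follows from the pure-state duality of the sandwiched R\'enyi conditional entropy \cite{Mueller2013, B13monotone} applied to $|\tilde\psi\rangle_{ZABC}$ in the partition $(Z \,|\, AB \,|\, C)$:
\begin{equation*}
\inf_{\tau_{AB}}D_\alpha(\tilde\psi_{ZAB}\Vert I_Z\otimes\tau_{AB})=-S_\alpha(Z|AB)_{\tilde\psi}=S_\beta(Z|C)_{\tilde\psi}=S_\beta(Z|C)_\omega,
\end{equation*}
where the last equality uses $\tilde\psi_{ZC}=\omega_{ZC}$ from identity~(i), and $1/\alpha+1/\beta=2$.

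The main obstacle is the operator Jensen step: one must carefully align the direction of the power-function inequality with the sign of $\alpha/(\alpha-1)$ across the two regimes $\alpha>1$ and $\alpha\in[1/2,1)$, and then recover the endpoint cases $\alpha\in\{1/2,1,\infty\}$ by continuity of the sandwiched R\'enyi quantities in $\alpha$. Everything else either reduces to the pure-state identity for $D_\alpha$ or to a direct appeal to the pure-state duality, and the extension to mixed $\psi_{ABC}$ alluded to in the paper is handled by purifying into an auxiliary reference and invoking data processing under its partial trace.
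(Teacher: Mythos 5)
Your proposal is correct, and its skeleton matches the paper's: you dilate the pinching $\mathcal{M}(\cdot)=\sum_j \Pi_A^j(\cdot)\Pi_A^j$ with the isometry $U_{A\to AZ}=\sum_j \Pi_A^j\otimes|j\rangle_Z$, identify $\inf_{\sigma_{AB}}D_\alpha(\psi_{AB}\Vert\mathcal{M}(\sigma_{AB}))$ with $-S_\alpha(Z|AB)$ of the dilated pure state, and finish with the pure-state duality $-S_\alpha(Z|AB)=S_\beta(Z|C)$ together with the marginal identification $\tilde\psi_{ZC}=\omega_{ZC}$. Where you genuinely diverge is in how the central identification is proved. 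The paper establishes the two inequalities abstractly: one direction uses unitary/isometric invariance plus the dominance $|j\rangle\langle j|_Z\leq I_Z$, and the other uses data processing under the two-outcome pinching channel with projector $P_{ABZ}=\sum_j\Pi_A^j\otimes|j\rangle\langle j|_Z$ together with the intertwining relation $P_{ABZ}U_{A\to AZ}=U_{A\to AZ}$. You instead exploit purity of $\psi_{AB}$ explicitly: the restriction of the infimum to block-diagonal $\sigma$ plus the rank-one formula $D_\alpha(\psi\Vert\zeta)=\frac{\alpha}{\alpha-1}\log_2\langle\psi|\zeta^{(1-\alpha)/\alpha}|\psi\rangle$ gives one direction, and the Choi/Hansen--Pedersen operator Jensen inequality for the unital CP map $\mathcal{M}$ applied to $x\mapsto x^{(1-\alpha)/\alpha}$ (operator convex for $\alpha>1$, operator concave for $\alpha\in[1/2,1)$, with the sign of $\frac{\alpha}{\alpha-1}$ flipping accordingly) gives the other; your sign bookkeeping is right in both regimes. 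The paper's route buys cleanliness: data processing for $D_\alpha$ holds for all $\alpha\geq 1/2$ including the endpoints and automatically takes care of support issues, so no functional calculus or limiting arguments are needed. Your route buys a more concrete, self-contained computation (it makes transparent why purity of $\psi_{AB}$ is what allows the pinching to be "pulled through"), at the cost of the technical care you yourself flag: for $\alpha>1$ the negative power requires either $\tau_{AB}$ full rank or an approximation argument consistent with the $+\infty$ support convention, and the endpoint $\alpha=\infty$ must be recovered by continuity, whereas the paper's data-processing argument covers these cases directly.
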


\begin{proof}
Our aim is to prove that%
\begin{equation}
\inf_{\sigma_{AB}}D_{\alpha}\!\left(\psi_{AB}\middle\Vert\sum_{j}\Pi_{A}%
^{j}\sigma_{AB}\Pi_{A}^{j}\right)
= -S_{\alpha}(Z|AB)_{\varphi
},\label{eq:rel-ent-to-cond-ent}%
\end{equation}
where%
\begin{align}
|\varphi\rangle_{AZBC}  & \equiv U_{A\rightarrow AZ}|\psi\rangle_{ABC},\\
U_{A\rightarrow AZ}  & \equiv\sum_{j}\Pi_{A}^{j}\otimes|j\rangle_{Z}.
\end{align}
Once this is established, it follows by duality of conditional sandwiched R\'enyi entropy \cite{Mueller2013,B13monotone} that%
\begin{equation}
-S_{\alpha}(Z|AB)_{\varphi}=S_{\beta}(Z|C)_{\varphi},
\end{equation}
and considering that%
\begin{align}
\operatorname{Tr}_{AB}\{|\varphi\rangle\langle\varphi|_{AZBC}\}  &
=\sum_{j,j^{\prime}}|j\rangle\langle j^{\prime}|_{Z}\otimes\operatorname{Tr}%
_{AB}\{\Pi_{A}^{j}\psi_{ABC}\Pi_{A}^{j^{\prime}}\}\\
& =\sum_{j,j^{\prime}}|j\rangle\langle j^{\prime}|_{Z}\otimes\operatorname{Tr}%
_{AB}\{\Pi_{A}^{j^{\prime}}\Pi_{A}^{j}\psi_{ABC}\}\\
& =\sum_{j,j^{\prime}}|j\rangle\langle j^{\prime}|_{Z}\otimes\delta
_{j,j^{\prime}}\operatorname{Tr}_{AB}\{\Pi_{A}^{j}\psi_{ABC}\}\\
& =\sum_{j}|j\rangle\langle j|_{Z}\otimes\operatorname{Tr}_{AB}\{\Pi_{A}%
^{j}\psi_{ABC}\}.
\end{align}
To this end, let $\sigma_{AB}$ be an arbitrary state. Then%
\begin{align}
D_{\alpha}\!\left(\psi_{AB}\middle\Vert\sum_{j}\Pi_{A}^{j}\sigma_{AB}\Pi_{A}%
^{j}\right)  & =
D_{\alpha}\!\left(U_{A\rightarrow AZ}\psi_{AB}U_{A\rightarrow
AZ}^{\dag}\middle \Vert U_{A\rightarrow AZ}\left[  \sum_{j}\Pi_{A}^{j}\sigma_{AB}%
\Pi_{A}^{j}\right]  U_{A\rightarrow AZ}^{\dag}\right)\\
& =D_{\alpha}\!\left(U_{A\rightarrow AZ}\psi_{AB}U_{A\rightarrow
AZ}^{\dag}\middle\Vert\sum_{j}\Pi_{A}^{j}\sigma_{AB}\Pi_{A}^{j}\otimes|j\rangle
\langle j|_{Z}\right)\\
& \geq D_{\alpha}\!\left(U_{A\rightarrow AZ}\psi_{AB}U_{A\rightarrow
AZ}^{\dag}\middle\Vert\sum_{j}\Pi_{A}^{j}\sigma_{AB}\Pi_{A}^{j}\otimes I_{Z}\right)\\
& \geq\inf_{\sigma_{AB}}D_{\alpha}(U_{A\rightarrow AZ}\psi
_{AB}U_{A\rightarrow AZ}^{\dag}\Vert\sigma_{AB}\otimes I_{Z})\\
& =-S_{\alpha}(Z|AB)_{\varphi}.
\end{align}
The first inequality follows because $\vert j \rangle \langle j \vert_Z \leq I_Z$ and from the property $D_{\alpha}(\rho \Vert \sigma) \geq 
D_{\alpha}(\rho \Vert \sigma')$ for $0 \leq \sigma \leq \sigma'$.
The second inequality follows because 
$\sum_{j}\Pi_{A}^{j}\sigma_{AB}\Pi_{A}^{j}$ is a state, and then we optimize over all possible states.
Since the above inequality holds for all states $\sigma_{AB}$, we find that
\begin{equation}
\inf_{\sigma_{AB}}D_{\alpha}\!\left(\psi_{AB}\middle \Vert\sum_{j}\Pi_{A}%
^{j}\sigma_{AB}\Pi_{A}^{j}\right)
\geq
-S_{\alpha}(Z|AB)_{\varphi}.
\end{equation}

Now let $\sigma_{AB}$ again be an arbitrary state. Then define the channel%
\begin{equation}
\overline{\Delta}_{ABZ}(\omega_{ABZ})=P_{ABZ}\omega_{ABZ}P_{ABZ}+\left(
I_{ABZ}-P_{ABZ}\right)  \omega_{ABZ}\left(  I_{ABZ}-P_{ABZ}\right)  ,
\end{equation}
where
$
P_{ABZ}\equiv\sum_{j}\Pi_{A}^{j}\otimes|j\rangle\langle j|_{Z}$.
Considering that
\begin{align}
P_{ABZ}U_{A\rightarrow AZ}  & =\left[  \sum_{j}\Pi_{A}^{j}\otimes
|j\rangle\langle j|_{Z}\right]  \left[  \sum_{j^{\prime}}\Pi_{A}^{j^{\prime}%
}\otimes|j^{\prime}\rangle_{Z}\right]  \\
& =\sum_{j,j^{\prime}}\Pi_{A}^{j}\Pi_{A}^{j^{\prime}}\otimes|j\rangle\langle
j|j^{\prime}\rangle_{Z}\\
& =\sum_{j}\Pi_{A}^{j}\otimes|j\rangle_{Z}\\
& =U_{A\rightarrow AZ},%
\end{align}
we find that%
\begin{align}
& D_{\alpha}(U_{A\rightarrow AZ}\psi_{AB}U_{A\rightarrow AZ}%
^{\dag}\Vert\sigma_{AB}\otimes I_{Z}) \notag \\
& \geq D_{\alpha}(\overline{\Delta}_{ABZ}(U_{A\rightarrow AZ}%
\psi_{AB}U_{A\rightarrow AZ}^{\dag})\Vert\overline{\Delta}_{ABZ}(\sigma
_{AB}\otimes I_{Z}))\\
& =D_{\alpha}(P_{ABZ}(U_{A\rightarrow AZ}\psi_{AB}U_{A\rightarrow
AZ}^{\dag})P_{ABZ}\Vert P_{ABZ}(\sigma_{AB}\otimes I_{Z})P_{ABZ})\\
& =D_{\alpha}\!\left(U_{A\rightarrow AZ}\psi_{AB}U_{A\rightarrow
AZ}^{\dag}\middle \Vert\sum_{j}\Pi_{A}^{j}\sigma_{AB}\Pi_{A}^{j}\otimes|j\rangle
\langle j|_{Z}\right)\\
& =D_{\alpha}\!\left(\psi_{AB}\middle \Vert\sum_{j}\Pi_{A}^{j}\sigma_{AB}\Pi
_{A}^{j}\right)\\
& \geq\inf_{\sigma_{AB}}D_{\alpha}\!\left(\psi_{AB}\middle \Vert\sum_{j}\Pi
_{A}^{j}\sigma_{AB}\Pi_{A}^{j}\right).
\end{align}
Since the above inequality holds for all states $\sigma_{AB}$, we find that%
\begin{equation}
-S_{\alpha}(Z|AB)_{\varphi}=\inf_{\sigma_{AB}} D_{\alpha}(U_{A\rightarrow AZ}\psi_{AB}U_{A\rightarrow AZ}^{\dag}\Vert
\sigma_{AB}\otimes I_{Z})\geq\inf_{\sigma_{AB}}D_{\alpha}%
(\psi_{AB}\Vert\sum_{j}\Pi_{A}^{j}\sigma_{AB}\Pi_{A}^{j}).
\end{equation}
Putting everything together implies \eqref{eq:rel-ent-to-cond-ent}.
\end{proof}

\section{
Extensions to continuous time and/or countable energy spectrum}

\label{app:contin-time-extension}

In this appendix, we provide a proof of the energy-time uncertainty relation
in three different cases:

\begin{enumerate}
\item Discrete time, Hamiltonian with countable spectrum and state $\rho_{A}$,
the latter two acting on a separable Hilbert space $\mathcal{H}_{A}$,

\item Continuous time, Hamiltonian with finite spectrum and state $\rho_{A}$,
the latter two acting on a finite-dimensional Hilbert space $\mathcal{H}_{A}$,

\item Continuous time, Hamiltonian with countable spectrum and state $\rho
_{A}$, the latter two acting on a separable Hilbert space $\mathcal{H}_{A}$.
\end{enumerate}

We begin with the first case. Let $H_{A}$ denote a Hamiltonian with a
countable spectrum, so that we can write it as%
\begin{equation}
H_{A}=\sum_{\varepsilon\in\mathcal{E}}\varepsilon\Pi_{A}^{\varepsilon},
\end{equation}
where the set $\mathcal{E}\subset\mathbb{R}$ is countable but bounded from
below, with smallest element $\varepsilon_{\min}$. Furthermore, suppose that
the support of $H_{A}$ is equal to the full Hilbert space $\mathcal{H}_{A}$.
We can then set an energy cutoff $E\geq\varepsilon_{\min}$, where $E$ is an
integer, and define the projection $\Pi_{A}^{E}=\sum_{\varepsilon
\in\mathcal{E}:\varepsilon\leq E}\Pi_{A}^{\varepsilon}$, so that $\Pi_{A}^{E}$
projects onto a finite-dimensional subspace of $\mathcal{H}_{A}$. Then define
the projected state $\rho_{A}^{E}$ from the original state $\rho_{A}$ as%
\begin{equation}
\rho_{A}^{E}=\Pi_{A}^{E}\rho_{A}\Pi_{A}^{E}+\operatorname{Tr}\{(I_{A}-\Pi
_{A}^{E})\rho_{A}\}\omega_{A}^{E},
\end{equation}
where $\omega_{A}^{E}$ is some arbitrary state supported on $\Pi_{A}^{E}$
(i.e., $\Pi_{A}^{E}\omega_{A}^{E}\Pi_{A}^{E}=\omega_{A}^{E}$). We also define
the following truncated Hamiltonian:%
\begin{equation}
H_{A}^{E}=\Pi_{A}^{E}H_{A}\Pi_{A}^{E}=\sum_{\varepsilon\in\mathcal{E}%
:\varepsilon\leq E}\varepsilon\Pi_{A}^{\varepsilon}.
\end{equation}
Note that the following limit holds%
\begin{equation}
\lim_{E\rightarrow\infty}\left\Vert \rho_{A}^{E}-\rho_{A}\right\Vert
_{1}=0.\label{eq:rho-state-limit}%
\end{equation}
Let the discrete times be given by $\mathcal{T}=\left\{  t_{1},\ldots
,t_{K}\right\}  $ where $K=\left\vert \mathcal{T}\right\vert $. Applying the
finite-dimensional energy-time entropic uncertainty relation in
\eqref{eq:min-rel-ent-inequality}, we find that%
\begin{equation}
\inf_{\sigma_{A}^{E}:\left[  H_{A}^{E},\sigma_{A}^{E}\right]  =0}D(\rho
_{A}^{E}\Vert\sigma_{A}^{E})+S(T|A)_{\kappa^{E}}\geq\log_{2}|\mathcal{T}%
|,\label{eq:finite-dim-unc-apply}%
\end{equation}
where $\sigma_{A}^{E}$ is an arbitrary state acting on the subspace onto which
$\Pi_{A}^{E}$ projects, and%
\begin{align}
\kappa_{TA}^{E} &  \equiv\frac{1}{|\mathcal{T}|}\sum_{k=1}^{K}|t_{k}%
\rangle\langle t_{k}|_{T}\otimes\rho_{A}^{E}(t),\\
\rho_{A}^{E}(t) &  \equiv e^{-iH_{A}^{E}t}\rho_{A}^{E}e^{iH_{A}^{E}t}.
\end{align}
Consider that the following limit holds%
\begin{equation}
\lim_{E\rightarrow\infty}\left\Vert \kappa_{TA}^{E}-\kappa_{TA}\right\Vert
_{1}=0,\label{eq:kappa-state-limit}%
\end{equation}
where%
\begin{equation}
\kappa_{TA}\equiv\frac{1}{|\mathcal{T}|}\sum_{k=1}^{K}|t_{k}\rangle\langle
t_{k}|_{T}\otimes e^{-iH_{A}t}\rho_{A}e^{iH_{A}t}.
\end{equation}
Consider that the inequality in \eqref{eq:finite-dim-unc-apply} is equivalent
to the following one:%
\begin{equation}
\inf_{\sigma_{A}^{E}}D(\rho_{A}^{E}\Vert\sum_{\varepsilon\in\mathcal{E}%
:\varepsilon\leq E}\Pi_{A}^{\varepsilon}\sigma_{A}^{E}\Pi_{A}^{\varepsilon
})+S(T|A)_{\kappa^{E}}\geq\log_{2}|\mathcal{T}|.
\end{equation}
So this means that, for an arbitrary positive-definite state $\sigma_{A}$
acting on the full separable Hilbert space $\mathcal{H}_{A}$, the following
inequality holds%
\begin{equation}
D(\rho_{A}^{E}\Vert\sum_{\varepsilon\in\mathcal{E}:\varepsilon\leq E}\Pi
_{A}^{\varepsilon}\sigma_{A}\Pi_{A}^{\varepsilon}/p_{E})+S(T|A)_{\kappa^{E}%
}\geq\log_{2}|\mathcal{T}|,
\end{equation}
where%
\begin{equation}
p_{E}=\sum_{\varepsilon\in\mathcal{E}:\varepsilon\leq E}\operatorname{Tr}%
\{\Pi_{A}^{\varepsilon}\sigma_{A}\}.
\end{equation}
Then we have that%
\begin{equation}
\lim_{E\rightarrow\infty}\left\Vert \sum_{\varepsilon\in\mathcal{E}%
:\varepsilon\leq E}\Pi_{A}^{\varepsilon}\sigma_{A}\Pi_{A}^{\varepsilon}%
/p_{E}-\sum_{\varepsilon\in\mathcal{E}}\Pi_{A}^{\varepsilon}\sigma_{A}\Pi
_{A}^{\varepsilon}\right\Vert _{1}=0.\label{eq:sigma-state-limit}%
\end{equation}
Now employing the limits in \eqref{eq:rho-state-limit},
\eqref{eq:kappa-state-limit}, and \eqref{eq:sigma-state-limit}, as well as the
limiting result for quantum relative entropy from \cite{Junge2018}, we
find that the following inequality holds for an arbitrary positive-definite
state $\sigma_{A}$:%
\begin{equation}
D(\rho_{A}\Vert\sum_{\varepsilon\in\mathcal{E}}\Pi_{A}^{\varepsilon}\sigma
_{A}\Pi_{A}^{\varepsilon})+S(T|A)_{\kappa}\geq\log_{2}|\mathcal{T}|.
\end{equation}
Since the inequality holds for an arbitrary positive-definite state
$\sigma_{A}$, and any positive semi-definite state can be approximated
arbitrarily well by a positive definite one, we can conclude that the
inequality above holds for an arbitrary state $\sigma_{A}$. Now, since we have
proven that the inequality holds for an arbitrary state $\sigma_{A}$, we can
conclude the following inequality:%
\begin{equation}
\inf_{\sigma_{A}}D(\rho_{A}\Vert\sum_{\varepsilon\in\mathcal{E}}\Pi
_{A}^{\varepsilon}\sigma_{A}\Pi_{A}^{\varepsilon})+S(T|A)_{\kappa}\geq\log
_{2}|\mathcal{T}|.
\end{equation}
This concludes the proof of the first case mentioned above.

\bigskip

We now turn to the second case mentioned above, in which the Hamiltonian has a
finite spectrum and the state $\rho_{A}$ acts on a finite-dimensional Hilbert
space, but there is a continuous time interval $\left[  0,T_{F}\right]  $. We
divide the time interval $\left[  0,T_{F}\right]  $ into $|\mathcal{T}%
|$\ equally sized bins, each of size $T_{F}/|\mathcal{T}|$, and we label each
bin by $t_{k}$ with $\mathcal{T}=\left\{  t_{1},\ldots,t_{K}\right\}  $ where
$K=\left\vert \mathcal{T}\right\vert $. We again start from the
finite-dimensional and (finite)\ discrete-time result from
\eqref{eq:min-rel-ent-inequality}, which implies that%
\begin{equation}
\inf_{\sigma_{A}:\left[  H_{A},\sigma_{A}\right]  =0}D(\rho_{A}\Vert\sigma
_{A})+S(T|A)_{\kappa}\geq\log_{2}|\mathcal{T}|,
\end{equation}
for%
\begin{align}
\kappa_{TA} &  \equiv\frac{1}{|\mathcal{T}|}\sum_{k=1}^{K}|t_{k}\rangle\langle
t_{k}|_{T}\otimes\rho_{A}(t),\\
\rho_{A}(t) &  \equiv e^{-iH_{A}t}\rho_{A}e^{iH_{A}t}.
\end{align}
Consider that%
\begin{align}
\inf_{\sigma_{A}:\left[  H_{A},\sigma_{A}\right]  =0}D(\rho_{A}\Vert\sigma
_{A}) &  =\inf_{\sigma_{A}}D(\rho_{A}\Vert\sum_{\varepsilon}\Pi_{A}%
^{\varepsilon}\sigma_{A}\Pi_{A}^{\varepsilon}),\\
S(T|A)_{\kappa} &  =-\sum_{k=1}^{K}D\Big(\frac{\rho_{A}(t_{k})}{|\mathcal{T}%
|}\Vert\overline{\rho}_{A}\Big),
\end{align}
where $\overline{\rho}_{A}=\frac{1}{\left\vert \mathcal{T}\right\vert }%
\sum_{k=1}^{K}\rho_{A}(t_{k})$. Then by introducing this scaling and adding
$-\log_{2}|\mathcal{T}|/T_{F}$ to the previous entropic trade-off, we find
that%
\begin{align}
\inf_{\sigma_{A}:\left[  H_{A},\sigma_{A}\right]  =0}D(\rho_{A}\Vert\sigma
_{A})+S(T|A)_{\kappa}-\log_{2}|\mathcal{T}|/T_{F} &  \geq\log_{2}%
|\mathcal{T}|-\log_{2}|\mathcal{T}|/T_{F}\\
&  =\log_{2}T_{F},
\end{align}
noting that this inequality holds for every such binning of the interval
$\left[  0,T_{F}\right]  $, with $|\mathcal{T}|$ arbitrarily large. After
applying \cite[Proposition~5]{furrer}, we find that%
\begin{equation}
\lim_{|\mathcal{T}|\rightarrow\infty}S(T|A)_{\kappa}-\log|\mathcal{T}%
|/T_{F}=-\int_{0}^{T_{F}}dt\ D(\rho_{A}(t)/T_{F}\Vert\overline{\rho}%
_{A})=s(T|A),
\end{equation}
where%
\begin{equation}
\overline{\rho}_{A}=\frac{1}{T_{F}}\int_{0}^{T_{F}}dt\ e^{-iH_{A}t}\rho
_{A}e^{iH_{A}t}.
\end{equation}
So taking the limit $|\mathcal{T}|\rightarrow\infty$ gives%
\begin{equation}
\inf_{\sigma_{A}:\left[  H_{A},\sigma_{A}\right]  =0}D(\rho_{A}\Vert\sigma
_{A})+s(T|A)\geq\log_{2}T_{F}.
\end{equation}
This concludes the proof of the inequality for the second case mentioned above.

\bigskip

The third case follows from a suitable combination of the first two. We can
first truncate the Hilbert space, take the continuous-time limit, and then
take the limit as the spectrum goes from finite to the full countable set.

\end{document}